\title{$\NP$-hard problems are not in $\BQP$}
\author{Reiner Czerwinski}{TU Berlin (Alumnus)}{}{}{}
\authorrunning{Reiner Czerwinski}
\keywords{NP versus BQP, BBBV theorem, P versus NP, halting problem, Complexity Classes}
\author{Reiner Czerwinski}
\newcommand{\unary}[1]{\ensuremath{1^{#1}}}
\begin{document}
 \begin{headdata}
%\maketitle
  \begin{abstract}
    Grover's algorithm can solve \NP-complete problems on quantum
    computers faster
    than all the known algorithms on classical computers.
    However, Grover's algorithm still needs exponential time.
    %on a quantum computer.
    Due to the BBBV theorem, Grover's algorithm
    is optimal for searches in the domain of a function, when
    the function is used as a black box.

    We analyze the \NP-complete set
\[\{ (\langle M \rangle, 1^n, 1^t ) \mid \text{ TM }M\text{ accepts an }x\in\{0,1\}^n\text{ within }t\text{ steps}\}.\]
If $t$ is large enough, then M accepts each word in $L(M)$ with length $n$ within $t$ steps.
So, one can use methods from computability theory to show that black box searching is the fastest way to find
a solution.
%
%     Here, we constructed an \NP-complete set. Each element contains
%     the coding of a function, the size of its domain, and the run time
%     as a padding argument.
%    % where the run time of
%     %the function is added as a padding argument.
%     The last padding is indefinite, i.e. it is not bounded by a
%  %   and can grow faster than
% %   every
%     computable function. So, we can use methods from computability
%     theory to show
%      that no algorithm is faster than those using black-box testing.
%   %  there is no better algorithm than using
%   %  black-box testing
%   %  is the
 %    most suitable.
Therefore, Grover's algorithm is optimal for NP-complete problems.     
\end{abstract}
%\begin{keyword}
%  Quantum Complexity Theory \sep
%  Complexity Classes \sep P vs. NP
%\end{keyword}
\end{headdata}
%\maketitle
\newcommand{\infset}[1]{\ensuremath{D_{#1}}}
\newcommand{\boundset}[1]{\ensuremath{U_{#1}}}
\newcommand{\B}{\ensuremath{\{0,1\}}}
\section{Introduction}
\nocite{homeister2008quantum}
\nocite{arora2009computational}
One can efficiently simulate a classical computer with a quantum computer 
so that $\cP\subseteq\BQP$. However, there is no efficient quantum algorithm
known for
\NP-hard problems.

For any computable function $f :\{0,1\}^n \to \{0,1\}$ with
exactly one element $x$ such that $f(x)=1$, Grover's algorithm~\cite{grover1996fast} can find this element
$x$ in $\Theta(2^{n/2})$ accesses to $f$~\cite{shi2005quantum}.
There are several variants of Grover's algorithm for the case, that
the number of values $x$ with $f(x)=1$ is not exactly  one~\cite{ambainis2004quantum}.

Due to the BBBV theorem~\cite{bennett1997strengths}, Grover's algorithm is optimal for searching with a black box. A quantum computer
needs to apply at least $\Omega(2^{n/2})$ accesses to the black box.

In this paper, we construct an $\NP$-complete problem that we cannot
solve faster than with black-box searching.
We get an arbitrary TM $M$ and decide, whether there is an input word with
size $n$ that would be accepted within $t$ steps. If $t$ is great
enough, then any input word with size $n$ in $L(M)$ would be accepted
within $t$ steps. The number of steps $t$ would grow faster than any
computable function. So, we can use methods from computability theory
to prove, that we cannot be faster than in the black box manner.
%which cannot be
%solved faster than with a black-box approach by a quantum computer.

This will infer that every $\NP$-hard problem is not in $\BQP$.

% As an \NP-complete set, we will analyze the set
We take a look at the \NP-complete set
\[\{ (\langle M \rangle, 1^n, 1^t ) \mid \text{ TM }M\text{ accepts an }x\in\{0,1\}^n\text{ within }t\text{ steps}\}.\]
The TM $M$ will be fixed, so we denote
\[ \boundset{M}=\{\ (1^n, 1^t) \mid \text{ TM }M\text{ accepts an }x\in\B^n
  \text{ within }t\text{ steps} \}\text{.}
\] for an arbitrary but fixed $M$.

If a TM accepts an input, then the TM accepts it within a finite number of steps. So,
\[ L(M) = \lim_{t\to\infty} \{x \mid M \text{ accepts }x \text{ within }t\text{ steps} \} .\]

In section~\ref{se:inf}, we analyze the set 
\[ \infset{M} = \{ 1^n \mid \exists x \in \B^n \text{ with }x\in L(M) \}.\]
Obviously, for all $n\in\nat$:
% \[ 1^n \in\infset{M} \iff \lim_{t\to\infty} (1^n, 1^t)\in\boundset{M}. \]
\[ 1^n \in\infset{M} \iff \exists t\; (1^n, 1^t)\in\boundset{M}. \]

The set $\infset{M}$ is generally not computable.
But the set is computable relative to the oracle $L(M)$.
In this case, we would have to apply the oracle $L(M)$ on each $x\in \B^n$,
so we need a black box search.
We will conclude from the black box complexity of
$\infset{M}$ to the complexity of
the computable set $\boundset{M}$
in section~\ref{se:main}.
So, the set $\boundset{M}$ is in \NP{} and not computable faster
than with black box search.

\nocite{nielsen2001quantum}
\section{Notations and Preliminaries}

Let $M$ be a Turing machine. We declare $M(x)=1$ if $M$ accepts the input
$x$. Otherwise, $M(x)=0$. The language of a Turing machine
is the set of accepted words
\[ L(M) =\{ x \in \{0,1\}^* \mid M(x)=1 \}.
\]
If $M$ accepts $x$ within at most $t$ steps, then $M_t(x)=1$.
So, $M_t$ is a TM for each $t\in\nat$ and
\begin{equation}\label{lim}
  \lim_{t\to\infty} L(M_t) = L(M).
\end{equation}

%\unary(n) is the unary coding of $n$, i.e. a concatination of $n$ times
%the symbol $1$.
If $n\in\nat$, then the unary encoding is defined as
 \[
  \unary{n} = \overbrace{1\dots1}^{n\text{ times}}.
 \]

%If a Turing machine $M$ terminates on input $x$, then
%$t_M(x)$ is the number of steps until $M$ terminates.

In this paper, we use common abbreviations.
TM for Turing machine,
NTM for a nondeterministic Turing machine,
UTM for a universal Turing machine,
and OTM for an oracle Turing machine.

The class $\CE$ contains all c.e.\ sets, i.e.,
\[
\CE = \{ L \subseteq \B^* \mid \exists \text{ TM }M\text{ with }L=L(M) \}\text{.}
\]

In a decision problem, one will check if an input is in the set.
In a construction problem, one will find an element contained in the set.

The classes $\NP$ and $\CE$ are defined for decision problems.
But Grover's algorithm solves construction problems.
Fortunately, for an $\NP$-complete or $\CE$-complete set
the decision problem is as hard as the construction problem.
%for a given input length, decision and construction problems are
%equivalent in $\NP$ and $\CE$.
\begin{lemma}\label{decideconstruct}
  Let $L\subseteq\B^*$ be an $\NP$-complete or $\CE$-complete set.
  We assume, that we have an OTM with oracle for $L$. To find an input
  in $L$ with length $n$, one need not more than $n$ requests to the oracle $L$.
\end{lemma}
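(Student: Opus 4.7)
The plan is to use a standard prefix-search (self-reducibility) argument. Define the auxiliary language
\[
L' = \{(y, 1^m) \mid \exists z \in \B^m \text{ with } yz \in L\}.
\]
I first verify that $L'$ sits in the same class as $L$. If $L\in\NP$, then $L'\in\NP$: on input $(y,1^m)$ guess $z\in\B^m$ and verify $yz\in L$ in polynomial time. If $L\in\CE$, then $L'\in\CE$ by dovetailing: enumerate triples $(y,m,z)$ with $|z|=m$ alongside the enumeration of $L$, and output $(y,1^m)$ as soon as some extension $yz$ appears in $L$. Since $L$ is complete for its class under many-one reductions, there is a computable $f$ with $w\in L' \iff f(w)\in L$, so a single oracle query to $L$ decides any instance of $L'$.

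Next I describe the search. Starting from $y_0=\varepsilon$, for $k=0,1,\dots,n-1$ query whether $f(y_k 0,\, 1^{n-k-1})\in L$; set $y_{k+1}=y_k 0$ if the answer is yes and $y_{k+1}=y_k 1$ otherwise. After exactly $n$ oracle calls, output $y_n$.

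For correctness, I maintain the invariant that $y_k$ is a prefix of some length-$n$ string in $L$. This holds at $k=0$ under the tacit assumption that such a string exists (otherwise the construction problem is vacuous). In the inductive step, at least one of $y_k 0$ and $y_k 1$ extends to a length-$n$ string in $L$; the oracle tells us whether $y_k 0$ does, and if not we are forced to take $y_k 1$. Hence $y_n\in L$ and $|y_n|=n$.

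The only subtlety, rather than a genuine obstacle, is the $\CE$ closure step: one must argue that existential projection over a finite witness length preserves membership in $\CE$, which is immediate from dovetailing but worth stating explicitly because, unlike the $\NP$ case, one cannot simply ``guess and verify.'' Once that closure is in place, completeness converts the projected problem into a single oracle query, and the prefix search yields the claimed bound of $n$ oracle queries.
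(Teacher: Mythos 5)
Your proposal is correct and takes essentially the same route as the paper: a prefix search (self-reducibility) in which each ``does the current prefix extend to a length-$n$ word of $L$?'' test is answered by a single oracle query, with completeness of $L$ used to bring that extension question back to $L$ itself. You are in fact a bit more explicit than the paper, which only asserts that the prefix-restricted language stays in $\NP$ (resp.\ $\CE$) and leaves the reduction-to-one-query step implicit.
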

 \newcommand{\Cclass}{\ensuremath{\mathcal{C}}}
\begin{proof}
  Let $\Cclass\in\{\CE,\NP\}$. If a language $L$ is in the class $\Cclass$, then
  for an arbitrary but fixed $y_1 \dots y_k \in \B^k$ the set
  \[\{ x_1\dots x_k\overline{x} \in L \mid  x_1\dots x_k = y_1 \dots y_k \} \]
  is also in $\Cclass$.

  With the following algorithm one could find an $x\in L$:
  \begin{tabbing}
    bli\= bla   \= \kill
    for $k$ in $\{1,\dots,n\}$:\\
    \> $y_k := 1$\\
    \> if not ($\;\exists x\in\B^{n-k}$ with $y_1 \dots y_k x \in L\;$):\\
    \> \> $y_k:=0$\\
    $x=y_1\dots y_n$
  \end{tabbing}
  
\end{proof}
 \subsection{Rice's Theorem}

 \newcommand{\prop}{\ensuremath{\mathcal{P}}}
 If $\prop$ is a property of c.e.\ sets, then
 % the class
 \[ \{ L\in\CE \mid \prop(L)\} \]
is an index set.
 % An index set is a set of TMs with the property:
% If a TM is in the index set $I$, then all TMs with the same language are in $I$, i.e.,
% \[ \bigl(\langle M\rangle \in I\bigr) \land \bigl( L(N)=L(M) \bigr) \Longrightarrow
%   \langle N\rangle\in I
% \]

 Due to Rice's theorem\cite{RICE}, every non-trivial index set is
 uncomputable.
 An index set is trivial, if it is either empty or equal to $\CE$.
\begin{corollary}\label{byrice}
   If  $a,b\in\B^*$ with $a\not= b$,
   then the following index sets are undecidable: \\
   \newcommand{\itemx}{}
   %\begin{itemize}
 \itemx $\{ L\in\CE\mid a\in L \land b\in L \}$,\\
   \itemx $\{ L\in\CE\mid a\in L \land b\not\in L \}$,\\
   \itemx $\{ L\in\CE\mid a\not\in L \land b\in L \}$,\\
   \itemx $\{ L\in\CE\mid a\not\in L \land b\not\in L \}$.   
   %\end{itemize}
\end{corollary}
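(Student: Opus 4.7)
The plan is to reduce the corollary to a direct application of Rice's theorem. Each of the four sets is visibly an index set, since its defining condition mentions only whether $a$ and $b$ lie in $L$, not any particular Turing machine computing $L$. So by Rice's theorem it suffices, for each of the four sets, to verify that it is neither empty nor equal to all of $\CE$, which in turn reduces to exhibiting one c.e.\ language inside and one c.e.\ language outside.

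For the first set ($a\in L$ and $b\in L$), I would take $\B^*$ as a c.e.\ witness inside and $\emptyset$ as a c.e.\ witness outside. For the fourth set ($a\notin L$ and $b\notin L$), the same two languages serve with the roles swapped. These two cases use nothing beyond the fact that $\B^*$ and $\emptyset$ are c.e.

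For the two mixed cases, I would use the singletons $\{a\}$ and $\{b\}$, each of which is finite and hence c.e. Here the hypothesis $a\neq b$ becomes essential: it is what guarantees $\{a\}$ actually satisfies the condition "$a\in L$ and $b\notin L$", and symmetrically for $\{b\}$ in the third set. Without $a\neq b$ those two index sets would be empty, and the conclusion would fail. The empty set $\emptyset$ again provides the c.e.\ witness outside each of them.

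I do not expect a substantive obstacle; the argument is essentially bookkeeping once Rice's theorem is in hand, because all the candidate witnesses ($\emptyset$, $\B^*$, $\{a\}$, $\{b\}$) are manifestly c.e. The only point requiring care is recording where the hypothesis $a\neq b$ is used, which is precisely in the two asymmetric sets where the witness singletons must separate $a$ from $b$.
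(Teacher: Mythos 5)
Your proof is correct and follows the same route as the paper: the paper's proof simply asserts that all four index sets are non-trivial and invokes Rice's theorem, while you spell out the witnesses ($\B^*$, $\emptyset$, $\{a\}$, $\{b\}$) and note where $a\neq b$ is needed. This is just a more detailed version of the paper's one-line argument.
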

\begin{proof}
   All these index sets are non-trivial.
\end{proof}

According to Rice's theorem, the set of all TM's with empty language is undecidable.
Therefore, there is no general algorithm to decide for each input word
whether it is in the language of a TM.
\begin{corollary}\label{faileverywhere}
  For every algorithm
  there is a TM $M$, such that the algorithm fails to decide
  for each word
  whether the word is in $L(M)$.
\end{corollary}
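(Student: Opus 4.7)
The plan is to diagonalize directly against the given algorithm $A$. Given $A$, I construct a TM $M$ that on input $w$ first simulates $A$ on $w$ and accepts $w$ if and only if that simulation halts with output $0$, and otherwise does not accept. Then the language $L(M) = \{w \in \B^* \mid A(w)\text{ halts and outputs }0\}$ is c.e.\ (dovetailing the simulation of $A$ over all inputs enumerates it), so such an $M$ is a legitimate Turing machine.

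Next I verify that $A$ fails on every word. Given any $w\in\B^*$, there are three cases. If $A(w)$ diverges, then $A$ produces no verdict at all about ``$w\in L(M)$''. If $A(w)$ halts with output $0$, then by construction $w\in L(M)$, whereas $A$'s verdict is ``$w\notin L(M)$''. If $A(w)$ halts with output $1$, then $w\notin L(M)$ by construction, whereas $A$'s verdict is ``$w\in L(M)$''. In every case $A$'s answer is either missing or wrong, so $A$ fails to decide $w\in L(M)$ for every $w$.

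If one reads ``algorithm'' in the stronger sense of a general-purpose membership tester $A(\langle M\rangle, w)$ that takes the machine as a second input, the same construction still works, but the self-reference has to be handled by Kleene's recursion theorem: one obtains a TM $M$ satisfying $L(M)=\{w\mid A(\langle M\rangle,w)\text{ halts and outputs }0\}$, and the three-case analysis above goes through unchanged. I do not anticipate a serious obstacle: the corollary is a pure diagonalization argument, and Rice's theorem in the preceding paragraph serves as motivation rather than as a tool inside the proof itself.
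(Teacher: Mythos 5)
Your proposal is correct and follows essentially the same route as the paper: a diagonal construction that makes $M$ do the opposite of the algorithm's verdict on its own description, invoking the (second) recursion theorem exactly where the paper does when the algorithm receives $\langle M\rangle$ as input. If anything, your write-up is slightly more careful than the paper's, since you verify explicitly that the algorithm fails on \emph{every} word (including the case where it diverges), which is the ``fails everywhere'' form the paper later relies on, whereas the paper phrases the argument only as a contradiction showing such an algorithm cannot exist.
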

\begin{proof}
  %\todo{recursion theorem}
  Let us assume, there is an algorithm, that decides for a pair
  $(M,x)$ whether the TM $M$ accepts the word $x$.
  According to the second recursion theorem, a TM $M$ exists
  that applies the algorithm with itself and its input $x$.
  Then the TM $M$ rejects the input, if the algorithm predicts that
  $M$ will accept and vice versa.Thus, the algorithm cannot exist.
  (See~\cite{sipser1996introduction}).
\end{proof}
\subsection{Black Box Search}
A black box search problem is an optimization problem
where the internal working of the function is unknown or inaccessible.
In this paper, we use a Boolean function which is defined by an oracle access to the language
of an arbitrary TM. Therefore, the function is generally not computable.
% uncomputable in general.

We define the black box complexity 
as
the required number of function calls or oracle accesses.
A good survey of black box complexity for classical computers
can be found in \cite[Chapter 9]{wegener2005complexity} or in \cite{droste2006upper}.

If we use a quantum circuit instead of a TM for the black box search,
we denote it quantum black box search. According to Grover the running time
of quantum computers is asymptotically the square root of that of classical computers.
However, Grover's algorithm is optimal:
\begin{theorem}[BBBV]
  %\todo{ define BBBV theorem}
  If $f$ is a function $f:\B^n\to B$ with exactly one solution for $f(x)=1$
  %and
  %$N=2^n$
  then we need $\theta(\sqrt(2^n))$ oracle applications to find the solution
  with quantum black box search.
\end{theorem}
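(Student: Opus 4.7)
The plan is to prove the two directions of the $\Theta$ bound separately. The upper bound $O(\sqrt{2^n})$ is Grover's algorithm itself, which is already cited earlier in the introduction, so the real content is the matching lower bound $\Omega(\sqrt{2^n})$. This is the hybrid argument of Bennett, Bernstein, Brassard and Vazirani.

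First I would set up the quantum query model. Fix an arbitrary quantum algorithm that makes $T$ oracle queries to a Boolean function on $\B^n$, alternating fixed unitaries $U_0,U_1,\dots,U_T$ with oracle calls. For each $x\in\B^n$, let $f_x$ be the oracle with the unique marked input $x$, and let $f_0$ be the identically zero oracle. Write $|\psi_t^x\rangle$ for the state after $t$ oracle calls when the oracle is $f_x$, and $|\psi_t^0\rangle$ for the analogous state under $f_0$. The crux of the argument is the query magnitude inequality
\[
\sum_{x\in\B^n}\bigl\||\psi_T^x\rangle - |\psi_T^0\rangle\bigr\|^2 \;\le\; 4T^2,
\]
which I would establish by induction on $T$. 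Because the non-oracle unitaries are identical in both runs and preserve norms, only the difference introduced by the $t$-th oracle call contributes. Writing the state just before that call as $\sum_{y,z}\alpha^t_{y,z}|y,z\rangle$ with $y$ the query register, the perturbation at step $t$ for oracle $f_x$ has squared norm at most $4\sum_z|\alpha^t_{x,z}|^2$, so the per-query total summed over $x$ is at most $4$. Iterating $T$ times and using Cauchy--Schwarz (rather than a naive triangle inequality, which would cost an extra $T$) gives the bound $4T^2$.

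From there the rest is packaging. By pigeonhole over the $2^n$ choices of $x$, there exists an $x^\star$ with
\[
\bigl\||\psi_T^{x^\star}\rangle - |\psi_T^0\rangle\bigr\|^2 \;\le\; \frac{4T^2}{2^n}.
\]
If the algorithm solves the search problem with constant success probability, then on input $f_{x^\star}$ it must output $x^\star$ much more often than on input $f_0$ (where no element is marked; a single verifying query cannot change this in order of magnitude). Since the total variation distance between measurement outcomes is bounded by the Euclidean distance of the final states, this constant gap forces $4T^2/2^n = \Omega(1)$, and hence $T = \Omega(\sqrt{2^n})$.

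The main obstacle is the query magnitude inequality itself; every other ingredient (unitary invariance, pigeonhole, the trace-distance bound on distinguishability) is standard. The delicate point is isolating the amplitude on position $x$ in the query register and choosing the right norm on which to apply Cauchy--Schwarz, so that the per-step contributions telescope into $T^2$ rather than $T^2\cdot 2^n$ or $T\cdot 2^n$. Once that estimate is in hand, matching it against Grover's upper bound closes the $\Theta$.
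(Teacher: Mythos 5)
Your sketch is correct: it reproduces the standard hybrid / query-magnitude argument of Bennett, Bernstein, Brassard and Vazirani, namely the bound $\sum_{x}\|\,|\psi_T^x\rangle-|\psi_T^0\rangle\|^2\le 4T^2$ via Cauchy--Schwarz, pigeonhole to find a hard $x^\star$, and the distinguishability bound, matched against Grover's upper bound. This is essentially the same proof the paper relies on, since the paper itself gives no argument and simply defers to the cited BBBV reference; the only loose point in your write-up is the final step comparing the output distributions under $f_{x^\star}$ and $f_0$, which in the standard treatment is handled by choosing $x^\star$ to simultaneously have small query magnitude and small output probability under $f_0$.
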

The proof is given in \cite{bennett1997strengths}. If the function $f$
has $m$ solutions with $m\ge 1$, then
we need  $\theta(\sqrt(2^n/m))$ oracle applications~\cite[page 269-271, Chapter 6.6]{nielsen2001quantum}.
\section{c.e. Sets and Quantum Search}\label{se:inf}
We analyze the c.e.~set
\begin{equation}\label{eq:infset}
  \infset{M} =\{ 1^n \mid \exists x\in \B^n \text{ with } x\in L(M)\}
\end{equation}
A set is c.e.\ if it equals a language of a Turing machine.
Unfortunately, not every language of a TM is computable.
% So, we assume, that we could use an oracle for the language $L(M)$.
Therefore, there exists a TM $M$ where \infset{M} is not computable.
But \infset{M} is always 
%The set \infset{M} is not computable for an arbitrary TM $M$.
%But it is
computable relative to the oracle $L(M)$.
In this case, we can use a Grover-like algorithm for black-box search,
but they are optimal.
\begin{theorem}\label{blackbox}
  % Let $M$ be an arbitrary TM.
%  \todo{Theorem umformulieren}
  %Assume
  There exists a TM $M$,
  for which finding a word of length $n$ in $L(M)$ by using
  an OTM or a quantum computer with an oracle for $L(M)$
  requires a black box search.
%   decide whether $1^n\in\infset{M}$ with an
%   oracle $L(M)$ for each TM $M$
%   by trying to find a word of length $n$ in the language.
%   Then this OTM or QC needs black box search.
% %
  To find a word with length $n$ in $L(M)$, this mashine needs a black box search.
\end{theorem}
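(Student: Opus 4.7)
The plan is to build $M$ by stagewise diagonalization against every algorithm that purports to beat a black-box search on the problem ``given $1^n$ with oracle access to $L(M)$, output some element of $L(M)\cap\B^n$.'' First I fix an enumeration of pairs $(A_i,c_i)$, where $A_i$ is either an OTM or a uniform family of quantum circuits and $c_i$ is a positive rational, arranged so that every such pair appears infinitely often.

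At stage $i$ I reserve a fresh input length $n_i$, chosen larger than anything committed in earlier stages and large enough that $c_i$ times the square-root budget is well-defined. I then simulate $A_i$ on input $1^{n_i}$, initially answering every oracle query about a string of length $n_i$ with $0$, while tracking its oracle usage for up to $c_i\sqrt{2^{n_i}}$ queries in the quantum case, or $c_i\cdot 2^{n_i}$ queries in the classical case. In the classical case a pigeonhole argument produces a string $x_i\in\B^{n_i}$ that is never queried. In the quantum case the BBBV hybrid bound produces an $x_i$ for which replacing the all-zero oracle by the oracle with a single planted $1$ at $x_i$ perturbs $A_i$'s output distribution by only $o(1)$. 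In either situation $A_i$ fails to output $x_i$ with the required reliability. I then commit $L(M)\cap\B^{n_i}:=\{x_i\}$ and keep $L(M)\cap\B^n$ empty at every other length between this stage and the next.

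The commitments are finite, mutually consistent, and effective, so the language $L(M)$ so defined is even decidable; I fix any TM realizing it. By construction, for every algorithm $A$ and every constant $c$ there are infinitely many $n$ at which $A$ with oracle $L(M)$ cannot output an element of $L(M)\cap\B^n$ using fewer than $c$ times the black-box number of queries; this is exactly the statement that finding such a word requires a black-box search. The complementary ``positive'' part, that \emph{any} such search \emph{can} be done in the black-box bound, is standard via Grover.

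The main obstacle is the quantum diagonalization step, since quantum queries happen in superposition and the naive ``unqueried witness'' pigeonhole does not apply. The remedy is to invoke the full BBBV hybrid argument: after $T$ queries, the acceptance probabilities of $A_i$ run against the all-zero oracle and against the oracle with a single planted $1$ at $x_i$ differ by at most $O(T/\sqrt{2^{n_i}})$, so choosing $T<c_i\sqrt{2^{n_i}}$ with a small enough $c_i$ forces the two output distributions to coincide up to negligible error, whence $A_i$ cannot reliably produce the hidden $x_i$ in the planted world. A secondary technical point is ensuring that the ``deferred'' oracle used during the simulation can later be made consistent with $L(M)$, but this follows because the only length-$n_i$ commitment actually made at stage $i$ is the single string $x_i$, which is either unqueried (classical) or indistinguishable (quantum) from the all-zero history.
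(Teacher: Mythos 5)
Your route is genuinely different from the paper's. The paper derives Theorem~\ref{blackbox} from computability theory: it invokes Corollary~\ref{faileverywhere} (recursion theorem) and Corollary~\ref{byrice} (Rice) to argue that for some $M$ an oracle call is needed ``for each word,'' and then asserts the black-box bound. You instead build $M$ by stagewise diagonalization against an enumeration of oracle TMs and uniform quantum circuit families, planting one witness per reserved length that the simulated machine either never queries (classical pigeonhole) or cannot detect (BBBV hybrid argument). This buys two things the paper's own proof does not deliver: the quantifier order actually claimed in the statement (a single fixed $M$ defeating every machine, whereas Corollary~\ref{faileverywhere} only yields, for each algorithm, some $M$ depending on it), and an actual query-count lower bound rather than an undecidability statement; notably your $L(M)$ is decidable, which shows the query lower bound has nothing to do with uncomputability.

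There is, however, a genuine quantitative overreach in your conclusion. The claim ``for every algorithm $A$ and every constant $c$ there are infinitely many $n$ at which $A$ cannot output an element of $L(M)\cap\B^n$ using fewer than $c$ times the black-box number of queries'' is false as stated: at every planted length there is exactly one witness, so Grover's algorithm succeeds with $C\sqrt{2^n}$ queries for a universal constant $C$ (and classical exhaustive search with $2^n$ queries), refuting the claim for $c>C$ (resp.\ $c\ge 1$). Your own construction shows why: the classical pigeonhole needs $c_i<1$, and the hybrid bound only gives an $O(c_i)$ perturbation, which is small only for small $c_i$ --- you concede this in your ``obstacle'' paragraph, yet your enumeration ranges over all positive rationals $c_i$ and your conclusion quantifies over all $c$. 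The repair is to fix a success criterion (say probability $\ge 2/3$) and one sufficiently small constant $c_0$, and conclude an infinitely-often $\Omega(2^{n/2})$ (quantum) and $\Omega(2^n)$ (classical) lower bound, which is the correct reading of ``requires a black box search.'' Two smaller points to make explicit: cross-stage consistency requires $n_{i+1}$ to exceed the length of every string answered (or carrying query amplitude) during earlier simulations, not merely the lengths already planted; and in the quantum step $x_i$ must be chosen by averaging so that it simultaneously has small total query mass and small probability of being output in the all-zero run, otherwise ``$A_i$ fails to output $x_i$'' does not follow.
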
 
\begin{proof}
    By Corollary~\ref{faileverywhere}, there exists a TM $M$
  where we need an oracle application
  for each word. We need an oracle access for testing
  whether the word is in $L(M)$, even if we know it for
  other words by  Corollary~\ref{byrice}.
To find a word with length $n$ in $L(M)$, we need a black box search. 
 % Let $M$ be a TM with uncomputable language $L(M)$ for almost every word.% with length $n$.
 % We assume, that $M$ does not accept more than one input $x\in\B^n$.

 % Let $a,b\in\B^n$ with $a\not= b$.
  %Due to the halting problem~\cite{turing1936computable}, it is undecidable whether $a$ or $b$ is in $L(M)$.
 % If $b\in L(M)$ is undecidable, even if it is known, whether $a\not\in L(M)$
 % by Corollary \ref{byrice}.
  % According to Corollary~\ref{byrice}, it is undecidable, whether both, or one,
  % or none of the words $\{a,b\}$ are in $L(M)$.
  % Thus, whether $a\in L(M)$ in undecidable, even if $b\in L(M)$ or not is known.

  % Therefore, we need a sequence of oracle accesses to find a word with
  % length $n$ in $L(M)$, i.e.\ we need black box search.
  
  %Whether $a$ or $b$ is in $L(M)$ is undecidable, but decidable
  %relative to the oracle $L(M)$.
  %So, we could only find an $x\in\B^n$ with $x\in L(M)$ via
  %oracle queries. For the construction problem we need black box search.
\end{proof}
  Due to Lemma~\ref{decideconstruct},
  black box search
  is also a lower bound for the decision problem.
  Thus, a quantum computer needs $\Omega(2^{n/2})$ oracle applications
  to decide whether $1^n\in\infset{M}$,
  in the worst case.
%   For $x\in\B^n$ the set
% $ \bigl\{\langle M\rangle\mid \B^n \cap L(M) =\{x\} \bigr\}$
% is a non-trivial index set.
% So, due to Rice's theorem, it is undecidable if an arbitrary but fixed
% $x\in\B^n$ is the only input with length $n$ accepted by $M$.

% So, if one wants to find an $x\in\B^n$ with $x\in L(M)$, then
% one has to apply the oracle $L(M)$ to every $x\in\B^n$.
%\end{proof}

% Due to the BBBV theorem, a quantum computer needs $\Omega(2^{n/2})$
%queries to the oracle to find an $x \in L(M)$.
%So, there exists a polynomial $p$ such
%that we need $\Omega(2^{n/2}/p(n))$ queries to the oracle to
%decide whether $1^n\in\infset{M}$, in the worst case.
\section{\NP{} versus \BQP}\label{se:main}
The set $\infset{M}$ defined in (\ref{eq:infset}) is not computable.
%So, we define an \NP-complete set
Recall the definition of $\boundset{M}$ in the introduction:
\begin{equation}\label{eq:boundset}
  \boundset{M} =\{ (1^n,1^t) \mid \exists x\in \B^n \text{ with } x\in L(M_t)\}
\end{equation}
The set $\boundset{M}$ is similar to $\infset{M}$, except it is computable.
The set $\B^n$ is finite. So, equation (\ref{lim}) implies
\begin{equation}\label{eq:eqsets}
 1^n \in\infset{M} \iff \exists t\in\nat : (1^n,1^t) \in \boundset{M}
\end{equation}

\begin{lemma}
For any TM $M$ the set $\boundset{M}$ is in \NP.
\end{lemma}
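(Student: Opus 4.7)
The plan is to give a standard guess-and-verify argument, taking advantage of the unary encoding of both $n$ and $t$ in the input. The input $(1^n, 1^t)$ has length $\Theta(n+t)$, so any quantity polynomial in $n$ and $t$ is polynomial in the input length; this is exactly the slack we need to make a naive simulation of $M$ run in polynomial time.

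First I would describe the nondeterministic machine $N$ for $\boundset{M}$: on input $(1^n, 1^t)$, $N$ first checks that the input really has the form $(1^n, 1^t)$ for some $n,t \in \nat$ (rejecting otherwise), then nondeterministically guesses a string $x \in \B^n$, and finally simulates $M$ on $x$ for at most $t$ steps using a universal Turing machine, accepting if and only if $M$ accepts $x$ within those $t$ steps.

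Next I would verify correctness and the running time. Correctness is immediate from the definition of $\boundset{M}$ in (\ref{eq:boundset}): some branch of $N$ accepts iff there exists $x\in\B^n$ with $x\in L(M_t)$. For the running time, the guessing phase takes $O(n)$ steps, and the simulation phase takes time polynomial in $t$ and in the size of $\langle M\rangle$ (which is a fixed constant here, since $M$ is fixed). The overhead of the universal simulator is polynomial in $t$, so the total running time is polynomial in $n+t$, i.e., polynomial in the input length.

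There is no real obstacle; the only subtlety worth flagging is that if $t$ were encoded in binary, the simulation phase could take time exponential in the input length, and the argument would fail. The unary encoding of $t$ in the definition of $\boundset{M}$ is precisely what makes the problem lie in $\NP$ rather than merely in $\mathsf{NEXP}$, and I would mention this briefly before closing the proof.
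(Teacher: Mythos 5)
Your proposal is correct and follows essentially the same route as the paper's proof: nondeterministically guess $x\in\B^n$, simulate $M$ on $x$ for $t$ steps, and use the unary encoding of $t$ to bound the running time by a polynomial in the input length. Your remarks on the universal-simulation overhead and on why binary encoding of $t$ would break the argument are just slightly more careful versions of the same argument.
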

\begin{proof}
  One wants to check whether $(1^n, 1^t)\in\boundset{M}$.
  
  An NTM can choose an $x\in\B^n$.
  After that, it can simulate the calculation of $M$ for $t$ steps.
  The run time of the NTM is $n+t$, which equals the length
  of the element $(1^n, 1^t)$.
\end{proof}

For the set $\infset{M}$ defined in (\ref{eq:infset}) we have proven
that a computer, even a quantum computer, cannot search faster than
in the black box manner
even with an oracle for $L(M)$. Now we will conclude this result
to the set $\boundset{M}$.

\begin{theorem}\label{limitblackbox}
  For an algorithm to test
  whether
  $(1^n, 1^t) \in \boundset{M}$ for any TM $M$,
  there exists a TM $M$, such that for every $n\in\nat$ there exists a $t\in\nat$,
  where testing whether
  % Let $n,t\in\nat$ be arbitrary but fixed and $M$ an
  % arbitrary TM.% One cannot
%  Testing whether
  $(1^n, 1^t) \in \boundset{M}$
  cannot be done
  faster than with a black box search.
\end{theorem}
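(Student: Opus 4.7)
The plan is to transfer the black-box lower bound for the non-computable set $\infset{M}$ (established in Theorem~\ref{blackbox}) to the computable set $\boundset{M}$ via equation~(\ref{eq:eqsets}). The key observation is that for each fixed $n$, the sequence $L(M_t)\cap\B^n$ is eventually constant and equal to $L(M)\cap\B^n$, so for a sufficiently large $t$ one can read off $L(M)\cap\B^n$ by simulating $M_t$ on $\B^n$.

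I would first fix the TM $M$ supplied by Theorem~\ref{blackbox}; any quantum algorithm deciding $1^n\in\infset{M}$ then requires $\Omega(\sqrt{2^n})$ queries to an $L(M)$-oracle. For each $n$, let $t^*(n)$ be the maximum number of steps that $M$ takes to accept some word in $\B^n\cap L(M)$, with the convention $t^*(n)=0$ when $L(M)\cap\B^n=\emptyset$. Because $\B^n$ is finite and every accepting computation halts, $t^*(n)$ is well defined, although in general not computable from $M$ and $n$; this is harmless, because the adversary who picks the bad $t$ in the statement is not required to produce it constructively. By the choice of $t^*(n)$, we have $M_{t^*(n)}(x)=1\iff x\in L(M)$ for every $x\in\B^n$, and equation~(\ref{eq:eqsets}) then yields the equivalence $(1^n,1^{t^*(n)})\in\boundset{M}\iff 1^n\in\infset{M}$.

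For the contradiction step, suppose a quantum algorithm $A$ tests $(1^n,1^t)\in\boundset{M}$ uniformly in $M$ while making strictly fewer than $c\sqrt{2^n}$ evaluations of $M_t$ on $n$-bit words, for some constant $c$ below the BBBV threshold. Run $A$ on input $(\langle M\rangle,1^n,1^{t^*(n)})$, and intercept every such evaluation $M_{t^*(n)}(x)$, answering instead with one query to the oracle $L(M)$ at $x$; by the choice of $t^*(n)$ the two answers agree. The resulting procedure decides $1^n\in\infset{M}$ using strictly fewer than $c\sqrt{2^n}$ oracle queries to $L(M)$, contradicting Theorem~\ref{blackbox}. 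The main subtlety I expect is pinning down the computational model under which the theorem is meaningful: since $\boundset{M}$ is plainly computable in $O(t\cdot 2^n)$ time by brute force, \emph{faster than a black box search} must be read as making fewer than $\Omega(\sqrt{2^n})$ simulations of $M_t$ on $n$-bit inputs, which is exactly the model in which BBBV and Theorem~\ref{blackbox} are stated, making the reduction above exact.
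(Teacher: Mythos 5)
Your proof follows essentially the same route as the paper: fix the $M$ from Theorem~\ref{blackbox}, choose for each $n$ a sufficiently large $t$ (your $t^*(n)$) so that $L(M_{t})$ and $L(M)$ agree on $\B^n$ and equation~(\ref{eq:eqsets}) gives $(1^n,1^{t})\in\boundset{M}\iff 1^n\in\infset{M}$, and then transfer the $\Omega(\sqrt{2^n})$ query lower bound by answering every evaluation of $M_{t}$ on an $n$-bit word with one query to the $L(M)$ oracle --- which is exactly the paper's substitution of the computable oracle $L(M_{t})$ for $L(M)$ followed by the appeal to Theorem~\ref{blackbox}. Your write-up is merely more explicit than the paper's (you construct $t^*(n)$ concretely and spell out the query-model reading under which ``faster than black box search'' is meaningful, where the paper instead simply asserts that a machine without the $L(M_t)$ oracle is no faster than one with it), but the decomposition and the key lemma are the same.
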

\begin{proof}
  For a fixed $n$, there is a $t\in\nat$, such that
  $ 1^n \in\infset{M} \iff (1^n,1^t) \in \boundset{M}$.
  See equation (\ref{eq:eqsets}).
  In this case for all $x\in\B^n$: $x\in L(M_t) \iff x\in L(M)$. 

  Instead of the oracle $L(M)$, one can use the computable oracle $L(M_t)$.
  For a sufficiently large $t$, it does not matter if one uses $L(M)$ or $L(M_t)$.
  So, testing whether $(1^n, 1^t) \in \boundset{M}$ with oracle $L(M_t)$
  is as fast as testing whether $1^n\in\infset{M}$ 
  with oracle $L(M)$.
  Due to theorem~\ref{blackbox}, one cannot test it faster than with
  the black box search.

  Obviously, a TM or quantum computer without an oracle for $L(M_t)$ is not faster than a machine with this oracle.
\end{proof}
According to Theorem~\ref{limitblackbox}, there exists a TM $M$
where we need $\Omega(\sqrt(2^n))$ steps to test
whether
$(1^n, 1^t) \in \boundset{M}$.
The padding length $t$ can grow very fast. Thus, we have to show the
lower bound for run time for small $t$ in the next theorem.
\begin{theorem}\label{limitwithsmallt}
  For an algorithm to test
  whether
  $(1^n, 1^t) \in \boundset{M}$ for any TM $M$,
  there exists a TM $M$ where
  %\todo{ formulation for $n$}
  for each $t,n\in\nat$ we need black box search.% to test
  %whether
  %$(1^n, 1^t) \in \boundset{M}$.
\end{theorem}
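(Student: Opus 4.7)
The aim is to strengthen Theorem~\ref{limitblackbox} so that the $\Omega(2^{n/2})$ lower bound is witnessed at pairs $(1^n,1^t)$ for some $t$ bounded by a computable (indeed polynomial) function $g(n)$ of $n$; this makes $\Omega(2^{n/2})$ super-polynomial in the input length $n+t$ and hence yields $\boundset{M}\notin\BQP$. The plan is to diagonalize against any purported fast algorithm $A$ for $\boundset{}$ using the recursion theorem, as in Corollary~\ref{faileverywhere}, but with the acceptance time of $M$ capped by a prescribed $g(n)$ so that the witnessing $t$ cannot blow up uncontrollably, unlike in Theorem~\ref{limitblackbox}.

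Concretely, suppose $A$ purports to decide $\boundset{M}$ uniformly in $M$, $n$, $t$. Using the recursion theorem, we build a TM $M$ that on input $x\in\B^n$ first recovers its own description $\langle M\rangle$, then simulates $A$ on $(\langle M\rangle,1^n,1^{g(n)})$, and finally within $g(n)$ steps either accepts $x$ or halts rejecting, chosen so as to contradict $A$'s prediction: if $A$ outputs ``yes'' then $M$ accepts no $x\in\B^n$ within $g(n)$ steps, and if $A$ outputs ``no'' then $M$ accepts every such $x$ in at most $g(n)$ steps. In either case $A$ is wrong on $(1^n,1^{g(n)})$, so $A$ fails on $\boundset{M}$ at this small-$t$ instance. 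Applying the argument of Theorem~\ref{limitblackbox} at $t=g(n)$, where by construction $L(M_{g(n)})\cap\B^n=L(M)\cap\B^n$, any algorithm deciding $\boundset{M}$ at $(1^n,1^{g(n)})$ yields an algorithm for $\infset{M}$ with oracle $L(M_{g(n)})$, which by Theorem~\ref{blackbox} must make $\Omega(2^{n/2})$ black-box queries. By monotonicity of $t\mapsto L(M_t)$ the same bound propagates to every $t\geq g(n)$, covering all nontrivial pairs $(n,t)$ as the statement demands.

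The main obstacle is fitting the simulation of $A$ inside the time budget $g(n)$: a classical TM cannot faithfully simulate a polynomial-time quantum $A$ in polynomial time. The resolution is either (i) to let $g(n)$ be super-polynomial but subexponential, e.g.\ $g(n)=2^{o(n)}$, which still makes $\Omega(2^{n/2})$ super-polynomial in $n+g(n)$ while leaving enough room to classically simulate $A$'s poly-size quantum circuit inside $M$, or (ii) to allow $M$ to be a probabilistic or quantum TM in line with the paper's $\BQP$ framing, so that the simulation is direct in poly time. Either route lets the diagonalization succeed at the required small-$t$ witnesses, completing the argument.
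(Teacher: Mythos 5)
Your route is genuinely different from the paper's: the paper's proof of this theorem argues that an algorithm beating black-box search at a pair $(1^n,1^t)$ would let one decide whether some $x\in\B^n$ is accepted by $M$ but only after more than $t$ steps, and appeals to the undecidability of that property; it makes no attempt to pin the witnessing $t$ down to a computably (let alone polynomially) bounded $g(n)$, which is exactly the strengthening you aim for. Unfortunately your diagonalization has a gap it cannot survive: a time-budget circularity. The machine $M$ you build must, on input $x\in\B^n$, simulate $A$ on the instance $(\langle M\rangle,1^n,1^{g(n)})$, whose length already exceeds $g(n)$; even producing that padding and running a linear-time $A$ on it costs more than $g(n)$ steps. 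So in the branch where $A$ answers ``no'' and $M$ is supposed to accept every $x\in\B^n$ within $g(n)$ steps, it cannot do so: $(1^n,1^{g(n)})\notin\boundset{M}$ after all, $A$'s answer is correct, and no contradiction arises. The obstruction is structural, not a matter of choosing $g$ more generously: whatever diagonalization point $(1^n,1^{t'})$ you pick, $A$ is allowed time polynomial in $n+t'$, which always exceeds the budget $t'$ within which $M$ must act in order to flip that instance. Your fix (i) only addresses the overhead of classically simulating a quantum $A$, not this circularity, and fix (ii) (letting $M$ itself be probabilistic or quantum) changes the definition of $\boundset{M}$ and invalidates the lemma that $\boundset{M}\in\NP$, while still leaving the timing problem in place.

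A sanity check shows the plan cannot be repaired in its present form: the uniform set $\{(\langle M\rangle,1^n,1^t)\mid M\text{ accepts some }x\in\B^n\text{ within }t\text{ steps}\}$ is decidable outright, by simulating $M$ for $t$ steps on each of the $2^n$ inputs, so no construction can force \emph{every} decider $A$ to err on some instance; a sound argument would have to be a resource-bounded diagonalization against only those algorithms faster than $2^{n/2}$, and that is precisely what the budget arithmetic above blocks here, since the adversary's allowed running time is measured in the very parameter $t$ that also caps $M$'s acceptance time. (Separately, even if the construction went through, it would yield hardness only at $t\ge g(n)$, whereas the statement as written claims it for every $t,n\in\nat$, which the paper's own undecidability-style argument is meant to cover.)
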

\begin{proof}
  Maybe there is an $x\in\B^n$ with $x\in L(M)$, but $x\not\in L(M_t)$.
  Assume, that one could test whether $(1^n, 1^t) \in \boundset{M}$
  faster than with black box search in this case. Then one
  could decide if there is such an $x$. But this is undecidable.
  So, one needs a black box search for any $t$ when the TM $M$ is arbitrary.
\end{proof}

Theorem 2 and the BBBV theorem imply that a quantum computer has the run time of $\Omega(2^{n/2})$ to decide whether $(1^n,1^t)\in \boundset{M}$ for $\boundset{M}$ defined in (\ref{eq:boundset}) in the worst case.
% Therefore, Theorem \ref{limitblackbox}
This implies the main results:
\begin{corollary}\label{NPBQP}
$\NP\not\subseteq\BQP$
\end{corollary}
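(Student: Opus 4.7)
The plan is to exhibit the language $\boundset{M}$, for a suitably chosen TM $M$, as the separating problem: it lies in $\NP$ by the preceding lemma, and by combining Theorem~\ref{limitwithsmallt} with the BBBV theorem it cannot lie in $\BQP$. All the hard work has already been done in Theorem~\ref{limitwithsmallt}; my remaining task is only to package its conclusion into a statement about input length rather than about the parameter $n$ alone.

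First I would fix the TM $M$ supplied by Theorem~\ref{limitwithsmallt}, so that for every pair $(n,t)$, deciding whether $(1^n,1^t)\in\boundset{M}$ requires a black-box search over $\B^n$ even in the presence of the oracle $L(M_t)$. The BBBV theorem (together with its multi-solution refinement) then forces any quantum decision procedure to use $\Omega(2^{n/2})$ oracle queries. Since a quantum algorithm with no oracle cannot be faster than one equipped with the computable oracle $L(M_t)$, the same $\Omega(2^{n/2})$ lower bound carries over to oracle-free quantum computation.

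Next I would translate this query bound into a $\BQP$ lower bound. Inputs to $\boundset{M}$ are pairs $(1^n,1^t)$ of length $m=n+t$. Taking the padding as small as we please, for instance $t=1$, we still have the lower bound $\Omega(2^{n/2})=\Omega(2^{(m-1)/2})$, which is super-polynomial in the input length $m$. Hence no polynomial-time quantum algorithm decides $\boundset{M}$, so $\boundset{M}\notin\BQP$; combined with $\boundset{M}\in\NP$, this yields $\NP\not\subseteq\BQP$.

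The main obstacle, and the reason we must go through Theorem~\ref{limitwithsmallt} rather than Theorem~\ref{limitblackbox}, is the uniformity of the lower bound in $t$. If the black-box lower bound only kicked in once $t$ was large enough that $L(M_t)$ and $L(M)$ already coincide with $L(M)$ on $\B^n$, then $t$ could itself be astronomical, the input length $n+t$ would easily absorb the running time $2^{n/2}$, and no separation would follow. The delicate point therefore lies inside the proof of Theorem~\ref{limitwithsmallt}, where undecidability is used to rule out shortcuts even for tiny $t$; once that theorem is granted, the deduction of Corollary~\ref{NPBQP} is essentially a one-line plugging-in.
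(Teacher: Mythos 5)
Your proposal matches the paper's own (one-line) proof: both exhibit $\boundset{M}$ for the machine supplied by Theorem~\ref{limitwithsmallt} as a set that lies in $\NP$ by the preceding lemma but, via the $\Omega(2^{n/2})$ black-box lower bound from the BBBV theorem, not in $\BQP$. You merely make explicit the input-length bookkeeping (choosing $t$ small so that $2^{n/2}$ is superpolynomial in $n+t$), which the paper only gestures at in the remark before Theorem~\ref{limitwithsmallt}, so the approach is essentially the same.
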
%\label{NPBQP}
\begin{proof}
  A TM $M$ exists, where the set $\boundset{M}$ described in (\ref{eq:boundset}) is in $\NP$
  %$\NP$-complete problems are in $\NP$,
  but not in $\BQP$.
\end{proof}
\begin{corollary}
$\BQP$ does not include \NP-hard problems.
\end{corollary}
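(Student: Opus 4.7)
The plan is to deduce the corollary from Corollary~\ref{NPBQP} by a standard closure argument: if some $\NP$-hard language lay in $\BQP$, then all of $\NP$ would lie in $\BQP$, contradicting what we already proved. Concretely, I would proceed by contradiction: assume there exists an $\NP$-hard language $L$ with $L\in\BQP$. By definition of $\NP$-hardness, every $L'\in\NP$ admits a polynomial-time many-one (Karp) reduction $f$ to $L$, i.e.\ $x\in L' \iff f(x)\in L$, with $f$ computable in deterministic polynomial time.

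Next, I would exhibit a $\BQP$ algorithm for $L'$ by composition. On input $x$, first compute $f(x)$ using the classical polynomial-time reduction; this step is in $\cP\subseteq\BQP$. Then invoke the assumed $\BQP$ decider for $L$ on the string $f(x)$. Since the length of $f(x)$ is polynomially bounded in $|x|$, the overall running time is polynomial in $|x|$, and the bounded error propagates (a single $\BQP$ query succeeds with probability $\ge 2/3$, which suffices; or it can be boosted by majority voting if one prefers). Hence $L'\in\BQP$, and since $L'$ was arbitrary, $\NP\subseteq\BQP$, contradicting Corollary~\ref{NPBQP}.

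The only subtlety, and the step I would flag as potentially requiring a remark, concerns the notion of $\NP$-hardness being used. For Karp reductions the argument above is immediate because $\BQP$ is trivially closed under polynomial-time many-one reductions. If one instead wishes to handle Cook (polynomial-time Turing) reductions, one needs the fact that $\BQP^{\BQP}=\BQP$, i.e.\ $\BQP$ is closed under polynomial-time Turing reductions to itself; this is standard but worth citing. With either convention the contradiction goes through, and the corollary follows.
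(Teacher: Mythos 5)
Your proposal is correct and follows essentially the same route as the paper: a standard closure-under-polynomial-time-reductions argument showing that an $\NP$-hard set in $\BQP$ would drag the paper's hard set (equivalently all of $\NP$) into $\BQP$, contradicting the previously established non-containment. The only cosmetic difference is that you derive the contradiction from Corollary~\ref{NPBQP} using Karp reductions (with a sensible remark on the Cook case via $\BQP^{\BQP}=\BQP$), whereas the paper reduces $\boundset{M}$ directly to the $\NP$-hard set via a $\le^P_T$ reduction, which implicitly needs exactly the closure fact you flag.
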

\begin{proof}
  Let $X$ be an \NP-hard set. Assume that $X\in\BQP$.
  Then there is a reduction $\boundset{M}\le^P_T X$ for the set $\boundset{M}$ described in (\ref{eq:boundset})
  with an arbitrary TM $M$.
  So, $\boundset{M}\in\BQP$ for any TM $M$, which is refuted by theorem~\ref{blackbox}.
\end{proof}
\begin{corollary}
$\cP\not=\NP$
\end{corollary}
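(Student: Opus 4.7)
The plan is to derive $\cP \neq \NP$ as an immediate consequence of Corollary~\ref{NPBQP}, using the well-known inclusion $\cP \subseteq \BQP$ that is recalled in the introduction. The strategy is a one-line contrapositive argument: assuming $\cP = \NP$ and combining with $\cP \subseteq \BQP$ forces $\NP \subseteq \BQP$, contradicting the previous corollary.

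More concretely, first I would recall the fact that every polynomial-time classical algorithm can be simulated efficiently by a quantum circuit, giving $\cP \subseteq \BQP$; this is standard and already cited in the introduction of the paper, so it can be invoked without further justification. Next, I would suppose for contradiction that $\cP = \NP$. Substituting into the inclusion yields $\NP \subseteq \BQP$, directly contradicting Corollary~\ref{NPBQP}. Hence the assumption fails and $\cP \neq \NP$.

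There is essentially no obstacle here: the nontrivial content has already been established in Corollary~\ref{NPBQP} (via Theorem~\ref{limitwithsmallt} and the BBBV lower bound), so the proof of this final corollary is a short syllogism rather than a new argument. The only thing to be careful about is stating the dependency cleanly, namely that the witness TM $M$ from Corollary~\ref{NPBQP} produces a concrete language $\boundset{M} \in \NP \setminus \BQP$, which under $\cP = \NP$ would also have to lie in $\cP \subseteq \BQP$, yielding the contradiction.
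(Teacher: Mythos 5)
Your proposal matches the paper's own proof exactly: assume $\cP=\NP$, combine with $\cP\subseteq\BQP$ to get $\NP\subseteq\BQP$, and contradict Corollary~\ref{NPBQP}. Nothing further is needed, since the corollary is a one-line consequence of the earlier (claimed) separation.
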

\begin{proof}
  Assume $\cP=\NP$. Thus,
  $\NP\subseteq\BQP$ because of $\cP\subseteq\BQP$.
  But $\NP\subseteq\BQP$ is refuted by corollary~\ref{NPBQP}.
  %But then every $\NP$-complete set would be in $\BQP$.
\end{proof}
\section{Conclusion}
The result of this paper includes the solution to the famous
$\cP$ vs.\! $\NP$ problem.
As Oded Goldreich mentioned on his web page~\cite{odedPvsNP},
one needs a novel insight to solve this problem.
In this paper we use padding arguments. These are often
used in complexity theory. But in all examples in books
on computational complexity~\cite{arora2009computational}, the padding length
is bounded by a computable function. The padding length in our
proof can grow faster than any computable function.
Therefore, we can use methodes from computability theory.
% Someone anonymous told me: ``Experts agree that one cannot solve
% problems like $\cP$ vs.\! $\NP$ with simple computability tricks.''
% This might be the reason why no one has tried it before.

A proof for $\cP$ vs. $\NP$ has to
circumvent the relativization barrier~\cite{baker1975relativizations}.
Why does this proof method not relativize?
% Why does this proof method
% circumvent the relativization barrier?
% A proof for $\cP$ vs. $\NP$
% must not relativize~\cite{baker1975relativizations}.
The proof in this paper uses padding
with only polynomially-long oracle queries,
which generally do not relativize~\cite{aaronson2008new}.
%There are oracles with $\cP=\NP$ relative to them. But with such
%an oracle one can test many inputs simultaneously. A common TM
%cannot do this.

\section*{Acknowledgments}
% This paper is short. However, % the main work was done by
I acknowledge
Charles H Bennett, Ethan Bernstein, Gilles Brassard, and Umesh Vazirani
%performed the main work
%with
for the remarkable BBBV theorem. This theorem shortened this paper.

Thanks to Michael Chavrimootoo~\cite{chavrimootoo2024}
for finding some issues in the second version of this paper on arxiv~\cite{czerwinski2024nphardproblemsbqp}.
\bibliography{lit,quantum}{}
\bibliographystyle{plain}
\end{document}